\documentclass[twocolumn,amsmath,amssymb,10pt,aps]{revtex4}
  
\pagestyle{plain}\textheight24cm

\usepackage[utf8]{inputenc}
\usepackage[T1]{fontenc}

\usepackage{amsmath}
\usepackage{amsfonts}
\usepackage{graphicx}
\usepackage{yfonts}
\usepackage{color}
\usepackage[normalem]{ulem}
\usepackage{amsthm}
\usepackage{bm}
\usepackage{bbm}
\usepackage{mathtools}
\usepackage{array}
\usepackage{placeins}
\usepackage{enumitem}

\newcommand{\der}{\,\mathrm{d}}

\def\<{\langle}
\def\>{\rangle}

\def\oper{{\mathchoice{\rm 1\mskip-4mu l}{\rm 1\mskip-4mu l}
{\rm 1\mskip-4.5mu l}{\rm 1\mskip-5mu l}}}
\DeclareMathAlphabet\mathbfcal{OMS}{cmsy}{b}{n}

\mathchardef\mhyphen="2D % Define a "math hyphen"

\newtheorem{Remark}{Remark}
\newtheorem{Proposition}{Proposition}
\newtheorem{Example}{Example}

\begin{document}

%\title{Mixtures of non-invertible generalized Pauli dynamical maps}
\title{Markovian semigroup from mixing non-invertible dynamical maps}
%\title{Construction of Markovian semigroup and memory kernels by mixing non-invertible dynamical maps}

\author{Katarzyna Siudzi{\'n}ska}
\affiliation{Institute of Physics, Faculty of Physics, Astronomy and Informatics \\  Nicolaus Copernicus University, ul. Grudzi\k{a}dzka 5/7, 87--100 Toru{\'n}, Poland}

\begin{abstract}
We analyze the convex combinations of non-invertible generalized Pauli dynamical maps. By manipulating the mixing parameters, one can produce a channel with shifted singularities, additional singularities, or even no singularities whatsoever. In particular, we show how to use non-invertible dynamical maps to produce the Markovian semigroup. Interestingly, the maps whose mixing results in a semigroup are generated by the time-local generators and time-homogeneous memory kernels that are not regular; i.e., their formulas contain infinities. Finally, we show how the generators and memory kernels change after mixing the corresponding dynamical maps.
\end{abstract}

\flushbottom

\maketitle

\thispagestyle{empty}

\section{Introduction}

It is impossible to completely isolate a physical system from any external interactions. Hence, to properly describe its dynamics, it becomes necessary to apply the methods used in the theory of open quantum systems \cite{BreuerPetr,Weiss}. The evolution $\rho\mapsto\rho^\prime=\Lambda[\rho]$ of an open quantum system in an initial state $\rho$ is characterized by a completely positive, trace-preserving map, also known as the {\it quantum channel}. If the system-environment interactions allow for a time-continuous description of the system dynamics, then one can treat time $t>0$ as a parameter. The time-parameterized family $\Lambda(t)$ of quantum channels, together with the initial condition $\Lambda(0)=\oper$, is called the {\it quantum dynamical map}. In the case of weak coupling between the system and its environment, $\Lambda(t)$ is the Markovian semigroup \cite{BreuerPetr}. This means that it solves the master equation
\begin{equation}\label{SGE}
\dot{\Lambda}(t)=\mathcal{L}\Lambda(t)
\end{equation}
with the Gorini-Kossakowski-Sudarshan-Lindblad (GKSL) generator \cite{GKS,L}
\begin{equation}
\mathcal{L}[\rho]=-\frac i\hbar [H,\rho]+\sum_\alpha \gamma_\alpha
\left(V_\alpha\rho V_\alpha^\dagger-\frac 12\{V_\alpha^\dagger V_\alpha,\rho\}\right),
\end{equation}
where $H$ is the effective Hamiltonian, $V_\alpha$ denote the noise operators, and $\gamma_\alpha\geq 0$ are the decoherence rates.
There are two ways to generalize eq. (\ref{SGE}), so that it also describes physical systems with memory. First, one can replace the time-independent $\mathcal{L}$ with the time-local generator $\mathcal{L}(t)$ whose decoherence rates $\gamma_\alpha(t)$ no longer need to be positive \cite{Andersson}. Second, one can replace the right hand-side of the master equation with the time integral \cite{Nakajima,Zwanzig}, so that
\begin{equation}
\dot{\Lambda}(t)=\int_0^tK(t,\tau)\Lambda(\tau)\der\tau,
\end{equation}
where $K(t,\tau)$ is the memory kernel. For a given dynamical map, one can usually find both the time-local generator and the memory kernel \cite{ENM}. However, determining the necessary and sufficient conditions for $\mathcal{L}(t)$ and $K(t,s)$ to generate a legitimate (completely positive and trace-preserving) dynamical map is still an open problem.

An increasing attention is being given to mixing quantum dynamical maps. Its rapid development began after it was proven that eternally non-Markovian qubit evolution can be obtained as a convex combination of two Pauli dynamical semigroups \cite{ENM,Nina}. This result was soon extended to qudits. It was shown that certain mixtures of generalized Pauli dynamical semigroups result in the evolution with one \cite{mub_final} or more \cite{ICQC} eternally negative decoherence rates. The fact that non-Markovianity emerges from mixtures of Markovian semigroups appeared to be non-intuitive at first. However, this behavior was later explained within a characterization of non-Markovianity in terms of the information flow \cite{Amato}. Interestingly, it is also possible to obtain a Markovian semigroup by taking a combination of two non-Markovian qudit evolutions \cite{CCnM}. Recently, more general mixtures of Pauli dynamical semigroups have been analyzed \cite{Jagadish2}. Some authors went beyond the Markovian semigroups and considered combinations of CP-divisible dynamical maps \cite{CCMS,Jagadish3}. However, the maps they were mixing were always invertible; that is, $\Lambda^{-1}(t)$ always existed and was well-defined.

In this paper, we go a step further and analyze convex combinations of non-invertible dynamical maps. We consider the qudit evolution provided by the generalized Pauli channels. One observes an interesting non-intuitive behavior: mixing non-invertible maps changes the singular points of the resulting map. One can shift the initial singularities to different points in time. Other singularities can be produced in addition to the existing ones. One can also remove all singularities and produce an invertible dynamical map, even a Markovian semigroup.

In the following sections, we present two ways to generate mixtures of non-invertible dynamical maps. For the master equations with time-local generators, the correspondence between the input and output decoherence rates is non-linear. Moreover, if the dynamical map is non-invertible, then the associated generator is singular. In the memory kernel approach, we find the class that corresponds to the convex combinations of generalized Pauli channels. In the Conclusions, we list the open questions that arose during our research.

\section{Mixing generalized Pauli channels}

Mixed unitary evolution of a qubit is described by the Pauli channels
\begin{equation}\label{Pauli}
\Lambda[X]=\sum_{\alpha=0}^3p_\alpha\sigma_\alpha X\sigma_\alpha,
\end{equation}
where $p_\alpha$ is a probability distribution and $\sigma_\alpha$ the Pauli matrices. Alternatively, it can be defined through the eigenvalue equations $\Lambda[\sigma_\alpha]=\lambda_\alpha\sigma_\alpha$, where $\lambda_0=1$ and
\begin{equation}
\lambda_\alpha=2(p_0+p_\alpha)-1
\end{equation}
for $\alpha=1,2,3$. These channels possess many important properties that make them relatively easy to analyze. Namely, the Pauli channels have eigenvectors that do not depend on the choice of $p_\alpha$. They preserve Hermiticity, from which it follows that their eigenvalues $\lambda_\alpha$ are real. Moreover, $\lambda_0=1$, which means these channels are unital.

A generalization of the Pauli channels to the qudit evolution that keeps their characteristic features was introduced by Nathanson and Ruskai \cite{Ruskai}. In their construction, the authors used the collections of $N$ orthonormal bases $\{\psi_k^{(\alpha)},k=0,\ldots,d-1\}$ in $\mathbb{C}^d$ for which $|\<\psi_k^{(\alpha)}|\psi_l^{(\beta)}\>|^2=1/d$ whenever $\alpha\neq\beta$. Such bases are known as {\it mutually unbiased bases} (MUBs). Assume that the Hilbert space $\mathcal{H}\simeq\mathbb{C}^d$ admits the maximal number of $N=d+1$ MUBs numbered by $\alpha=1,\ldots,d+1$. Then, one can use the rank-1 projectors $P_k^{(\alpha)}=|\psi_k^{(\alpha)}\>\<\psi_k^{(\alpha)}|$ to introduce $d+1$ unitary operators
\begin{equation}
U_\alpha=\sum_{l=0}^{d-1}\omega^lP_l^{(\alpha)},\qquad \omega=e^{2\pi i/d}.
\end{equation}
The generalized Pauli channels are defined by \cite{Ruskai,mub_final}
\begin{equation}
\Lambda=p_0\oper+\frac{1}{d-1}\sum_{\alpha=1}^{d+1}p_\alpha\mathbb{U}_\alpha,
\end{equation}
where $p_\alpha$ is a probability distribution and
\begin{equation}
\mathbb{U}_\alpha[\rho]=\sum_{k=1}^{d-1}U_\alpha\rho U_\alpha^\dagger.
\end{equation}
For $d=2$, one recovers the Pauli channels from eq. (\ref{Pauli}).
Throughout this paper, we characterize the generalized Pauli channels by their eigenvalues $\lambda_\alpha$, where $\Lambda[U_\alpha^k]=\lambda_\alpha U_\alpha^k$. They are related to $p_0$ and $p_\alpha$ via
\begin{equation}
\lambda_\alpha=\frac{1}{d-1}\left[d(p_\alpha+p_0)-1\right].
\end{equation}
Recall that $\Lambda$ is invertible if and only if there exists $\Lambda^{-1}$ such that $\Lambda\Lambda^{-1}=\Lambda^{-1}\Lambda=\oper$. In terms of its eigenvalues, it means that $\Lambda$ is invertible as long as $\lambda_\alpha\neq 0$ for all $\alpha$. Now, the complete positivity criteria $p_\alpha\geq 0$, $\sum_{\alpha=0}^{d+1}p_\alpha=1$ translate to the generalized Fujiwara-Algoet conditions \cite{Fujiwara,Ruskai}
\begin{equation}
-\frac{1}{d-1}\leq\sum_{\beta=1}^{d+1}\lambda_\beta\leq 1+d\min_\alpha\lambda_\alpha.
\end{equation}

In the theory of open quantum systems, the evolution of a quantum system is described by a family of time-parameterized channels $\Lambda(t)$, $t\geq 0$, with the initial condition $\Lambda(0)=\oper$. Observe that the generalized Pauli dynamical map
\begin{equation}
\Lambda(t)=p_0(t)\oper+\frac{1}{d-1}\sum_{\alpha=1}^{d+1}p_\alpha(t)\mathbb{U}_\alpha
\end{equation}
can be obtained by mixing the dynamical maps
\begin{equation}
\Lambda_\alpha(t)=(1-\pi_\alpha(t))\oper+\frac{\pi_\alpha(t)}{d-1}\mathbb{U}_\alpha
\end{equation}
with $0\leq \pi_\alpha(t)\leq 1$. Indeed, one has
\begin{equation}
\Lambda(t)=\frac{1}{d+1}\sum_{\alpha=1}^{d+1}\Lambda_\alpha(t),
\end{equation}
where $p_\alpha(t)=\frac{1}{d+1}\pi_\alpha(t)$. In other words, any generalized Pauli dynamical map is a mixture of $\Lambda_\alpha(t)$ with identical mixing parameters $x_\alpha=\frac{1}{d+1}$ and arbitrary $\pi_\alpha(t)$. In this paper, we consider the mixtures of $\Lambda_\alpha(t)$ with arbitrary mixing parameters $x_\alpha\geq 0$, $\sum_{\alpha=1}^{d+1}x_\alpha=1$, and identical $\pi_\alpha(t)\equiv p(t)$, so that
\begin{equation}\label{Lt}
\Lambda(t)=\sum_{\alpha=1}^{d+1}x_\alpha\Lambda_\alpha(t)=(1-p(t))\oper+\frac{p(t)}{d-1}
\sum_{\alpha=1}^{d+1} x_\alpha\mathbb{U}_\alpha.
\end{equation}
The eigenvalues of such $\Lambda(t)$ read
\begin{equation}\label{lat}
\lambda_\alpha(t)=1-\frac{d}{d-1}(1-x_\alpha)p(t)=x_\alpha+(1-x_\alpha)\lambda(t),
\end{equation}
where $\lambda(t)\in[-\frac{1}{d-1},1]$ is a $d(d-1)$-times degenerated eigenvalue of $\Lambda_\alpha(t)$ to $U_\beta^k$, $\beta\neq\alpha$ (the eigenvalue to $U_\alpha^k$ is $1$).
%For $\lambda(t)=e^{-rt}$ with $r>0$, one recovers $\Lambda(t)$ that is a convex combination of Markovian semigroups. 
Note that $\lambda_\alpha(t)$ cannot be negative for a positive $\lambda(t)$. Therefore, non-invertible dynamical maps cannot be constructed by mixing invertible generalized Pauli dynamical maps. The converse is not true, as mixtures of $d+1$ non-invertible generalized Pauli dynamical maps can produce invertible channels.

\begin{Proposition}
A mixture of non-invertible $\Lambda_\alpha(t)$ is invertible if and only if $x_\alpha\neq 0$ for all $\alpha=1,\ldots,d+1$ and
\begin{equation}
\lambda(t)>-\frac{x_{\min}}{1-x_{\min}},\qquad x_{\min}=\min_\alpha x_\alpha.
\end{equation}
\end{Proposition}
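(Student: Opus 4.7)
The approach is to reduce the whole question to a scalar condition on the function $\lambda(t)$, using the eigenvalue formula (\ref{lat}). Since the mixed map is invertible exactly when each of the $d(d-1)$-fold degenerate eigenvalues $\lambda_\alpha(t)=x_\alpha+(1-x_\alpha)\lambda(t)$ is nonzero, and since the constraint $\sum_\alpha x_\alpha=1$ with $d+1$ strictly positive weights forces $x_\alpha<1$, the condition $\lambda_\alpha(t_0)=0$ is equivalent to $\lambda(t_0)=-x_\alpha/(1-x_\alpha)$. So the task is to characterize when $\lambda(t)$ avoids all $d+1$ of the threshold values $-x_\alpha/(1-x_\alpha)$ for every $t$.

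\emph{Necessity.} Suppose the mixed map is invertible at every $t$. If some $x_\alpha$ vanished, then $\lambda_\alpha(t)=\lambda(t)$, which is forced to be zero at the singular time of the non-invertible component $\Lambda_\alpha(t)$, contradicting invertibility; hence $x_\alpha>0$ for all $\alpha$. Next, since $f(x)=x/(1-x)$ is strictly increasing on $[0,1)$, one has $-x_{\min}/(1-x_{\min})\geq -x_\alpha/(1-x_\alpha)$ for every $\alpha$, so $-x_{\min}/(1-x_{\min})$ is the largest of the threshold values. If $\lambda(t)$ ever dropped to or below $-x_{\min}/(1-x_{\min})$, then continuity of $\lambda(t)$ together with $\lambda(0)=1$ and the intermediate value theorem would yield a time $t^{\star}$ with $\lambda(t^{\star})=-x_{\min}/(1-x_{\min})$, so that $\lambda_{\alpha_{\min}}(t^{\star})=0$, contradicting invertibility. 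Hence $\lambda(t)>-x_{\min}/(1-x_{\min})$ for all $t$.

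\emph{Sufficiency.} Conversely, assume $x_\alpha>0$ for all $\alpha$ and $\lambda(t)>-x_{\min}/(1-x_{\min})$. The monotonicity of $f$ again gives $\lambda(t)>-x_{\min}/(1-x_{\min})\geq-x_\alpha/(1-x_\alpha)$ for every $\alpha$, hence $\lambda_\alpha(t)=x_\alpha+(1-x_\alpha)\lambda(t)>0$ for every $\alpha$ and every $t$. Every eigenvalue of $\Lambda(t)$ is therefore strictly positive, so $\Lambda(t)$ is invertible.

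The main subtlety lies in the necessity argument: pointwise, the inequality $\lambda(t)>-x_{\min}/(1-x_{\min})$ is \emph{stronger} than the bare requirement $\lambda_\alpha(t)\neq 0$, because in principle $\lambda(t)$ could sit inside one of the gaps between two thresholds $-x_\alpha/(1-x_\alpha)$ without triggering any singularity at that instant. The upgrade from non-vanishing to the stated inequality is precisely what the initial condition $\lambda(0)=1$ and continuity of $\lambda(t)$ supply via IVT; without this dynamical input the equivalence would collapse to a sufficient-only condition.
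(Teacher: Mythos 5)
Your proof is correct and follows essentially the same route as the paper: rule out $x_\alpha=0$ via the forced singularity of $\lambda_\alpha(t)=\lambda(t)$, then reduce invertibility to the threshold condition $\lambda(t)>-x_\alpha/(1-x_\alpha)$, checked only at $x_{\min}$ by monotonicity of $x\mapsto x/(1-x)$. The one place you go beyond the paper is the continuity/IVT argument showing that strict positivity of the eigenvalues (not merely non-vanishing) is the right criterion; the paper simply asserts this equivalence, so your added step closes a small gap rather than changing the approach.
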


\begin{proof}
First, note that if $x_\alpha=0$ for a fixed $\alpha$, then $\lambda_\alpha(t)=\lambda(t)$ by eq. (\ref{lat}). By assumption, $\Lambda_\alpha(t)$ are non-invertible, which means that $\lambda(t)=0$ for some $t>0$. Therefore, if even a single $x_\alpha=0$, the resulting map $\Lambda(t)$ cannot be inverted. The condition $x_\alpha\neq 0$ for all $\alpha$ follows.

Now, if all $x_\alpha\neq 0$, the necessary and sufficient condition for the invertibility of $\Lambda(t)$ is the positivity of its eigenvalues,
\begin{equation}
\lambda_\alpha(t)=x_\alpha+(1-x_\alpha)\lambda(t)>0.
\end{equation}
The above inequality holds if and only if
\begin{equation}\label{con}
\lambda(t)>-\frac{x_\alpha}{1-x_\alpha}
\end{equation}
for all $\alpha=1,\ldots,d+1$. It is enough to check this condition for the smallest value of $x_\alpha$, which ends the proof.
\end{proof}

\begin{Example}
Let us take a combination of $d+1$ dynamical maps $\Lambda_\alpha(t)$ with identical mixing parameters $x_\alpha=\frac{1}{d+1}$. The eigenvalues of the resulting family of channels are given by
\begin{equation}\label{la}
\lambda_\alpha(t)=\frac{1+d\lambda(t)}{d+1}.
\end{equation}
According to condition (\ref{con}), they are positive if and only if $\lambda(t)>-1/d$. Hence, one can have $\lambda(t_\ast)=0$ for $t_\ast>0$, and the map $\Lambda(t_\ast)$ is still invertible.
\end{Example}

As a special case of the evolution provided by eq. (\ref{Lt}), consider the mixture with $k\leq d+1$ identical non-zero $x_\alpha=1/k$. Then, one has
\begin{equation}\label{lak}
\lambda_\alpha(t)=\left\{
\begin{aligned}
&\frac{1+(k-1)\lambda(t)}{k}, &1\leq\alpha\leq k,\\
&\lambda(t), &k+1\leq\alpha\leq d+1.
\end{aligned}\right.
\end{equation}
Observe that if $k\leq d$, then there exist dynamical maps $\Lambda(t)$ with two distinct $\lambda_\alpha(t)$ that reach zero at some point.
Assume that $\lambda(t_\ast)=0$ for a time $t_\ast>0$ and $\lambda(t_\#)=-\frac{1}{k-1}$ for some $t_\#>t_\ast$. Then, according to eq. (\ref{lak}),
\begin{equation}
\begin{split}
\lambda_\alpha(t_\#)=0,&\qquad 1\leq \alpha\leq k,\\
\lambda_\alpha(t_\ast)=0,&\qquad k+1\leq\alpha\leq d+1.
\end{split}
\end{equation}
Recall that $-\frac{1}{d-1}\leq\lambda(t)\leq 1$, and hence it is possible to have $\lambda(t)=-\frac{1}{k-1}$ only for $k=d$. Notably, for $d=2$, this is the entire admissible range of $k$.

\begin{Example}\label{ExZo}
Construct the convex combination of $d$ generalized Pauli dynamical maps characterized by
\begin{equation}\label{Zo}
\lambda(t)=e^{-Zt}\cos\omega t,\qquad Z,\omega\geq 0.
\end{equation}
The roots of $\lambda(t)$ are $t_\ast(N)=\frac{\pi}{2\omega}+N\pi$, $N\in\mathbb{Z}_+$. The global minimum corresponds to $t_\#=\frac{\pi}{\omega}$ and is equal to $\lambda(t_\#)=-\frac{1}{d-1}$ provided that there is the following relation between $Z$ and $\omega$,
\begin{equation}
Z=\frac{\omega}{\pi}\ln(d-1).
\end{equation}
Indeed, $\Lambda(t)$ has more singular points than $\Lambda_\alpha(t)$, as for $1\leq\alpha\leq d$,
\begin{equation}
\lambda_\alpha(t)=\frac{1+(d-1)e^{-Zt}\cos\omega t}{d}
\end{equation}
adds its own singular point at $t_\#$.
\end{Example}

Through mixtures, one can not only add and remove the singularities of dynamical maps but also shift them to another point in time.

\begin{Proposition}
A mixture $\Lambda(t)$ of dynamical maps $\Lambda_\alpha(t)$ with $N$ singularities at $t_\ast(N)>0$ has all its singularities shifted to $t_\#(N)$ if and only if $k=d+1$ and $\lambda(t)$ satisfies
\begin{equation}
\lambda(t_\ast(N))=0,\qquad\lambda(t_\#(N))=-\frac 1d.
\end{equation}
\end{Proposition}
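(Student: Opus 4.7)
My plan is to work directly from equation (\ref{lak}), which gives the eigenvalues of $\Lambda(t)$ in terms of $k$ and $\lambda(t)$. Recall that ``shifting all singularities from $t_\ast(N)$ to $t_\#(N)$'' means two things: (i) $\Lambda(t)$ has no zero eigenvalue at any $t_\ast(N)$, and (ii) at each $t_\#(N)$ at least one eigenvalue of $\Lambda(t)$ vanishes. I will argue both implications by inspecting which of the two branches of (\ref{lak}) can produce a zero, at the relevant times.

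For the forward direction, assume the shift. Suppose $k\leq d$. Then (\ref{lak}) contains the branch $\lambda_\alpha(t)=\lambda(t)$ for $k+1\leq \alpha\leq d+1$. Since $\Lambda_\alpha(t)$ has singularities at $t_\ast(N)$, we have $\lambda(t_\ast(N))=0$, and this branch forces $\Lambda(t)$ to inherit a singularity at each $t_\ast(N)$, contradicting (i). Hence $k=d+1$, which collapses (\ref{lak}) to the single formula $\lambda_\alpha(t)=\bigl(1+d\lambda(t)\bigr)/(d+1)$ for all $\alpha$. Now imposing (ii) at $t_\#(N)$ gives $1+d\lambda(t_\#(N))=0$, i.e.\ $\lambda(t_\#(N))=-1/d$; while the hypothesis that $\Lambda_\alpha(t)$ is singular at $t_\ast(N)$ gives $\lambda(t_\ast(N))=0$.

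For the converse, assume $k=d+1$ together with $\lambda(t_\ast(N))=0$ and $\lambda(t_\#(N))=-1/d$. Then all $d+1$ nontrivial eigenvalues of $\Lambda(t)$ are equal to $\bigl(1+d\lambda(t)\bigr)/(d+1)$. At $t=t_\ast(N)$ they evaluate to $1/(d+1)\neq 0$, so $\Lambda(t_\ast(N))$ is invertible; at $t=t_\#(N)$ they evaluate to $0$, so $\Lambda(t_\#(N))$ is singular. This verifies that the singular times have been moved from $t_\ast(N)$ to $t_\#(N)$.

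I do not see a substantive obstacle here: the proof is a direct case analysis enabled by the sharp dichotomy in (\ref{lak}) between ``mixed'' eigenvalues and eigenvalues that equal $\lambda(t)$ itself. The only mild point worth stating carefully is that $\lambda(t_\#(N))=-1/d$ is consistent with the allowed range $\lambda(t)\geq -1/(d-1)$, so the configuration is genuinely realizable; otherwise the argument is a one-line identification of zeros of $\lambda_\alpha(t)$.
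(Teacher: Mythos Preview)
Your argument is correct and mirrors the paper's reasoning. The paper does not give a formal proof of this proposition; the justification is contained in the discussion of eq.~(\ref{lak}) immediately preceding it, where it is noted that for $k\leq d$ the branch $\lambda_\alpha(t)=\lambda(t)$ survives (so the original zeros persist), and that for the ``mixed'' branch to vanish one needs $\lambda(t)=-1/(k-1)$, which for $k=d+1$ gives $-1/d$. Your write-up simply makes this case analysis explicit and adds the converse direction, which the paper leaves implicit.
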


Note that the first singularity point $t_\ast$ can be only shifted to a later time $t_\#>t_\ast$.

\begin{Example}\label{Exo}
Let us take a convex combination of three Pauli dynamical maps $\Lambda_\alpha(t)$ (for $d=2$) with $x_\alpha=1/3$ amd $\lambda(t)=\cos\omega t$, $\omega>0$. The eigenvalues of the resulting map are
\begin{equation}\label{o}
\lambda_\alpha(t)=\frac{1+2\cos\omega t}{3}.
\end{equation}
Their first root corresponds to
\begin{equation}
t_\#(0)=\frac{1}{\omega}\arccos\left(-\frac 12\right),
\end{equation}
which is $t_\ast(0)=\frac{\pi}{2\omega}$ shifted by $\Delta t=t_\#(0)-t_\ast(0)$. Analogical relations can be seen between the following roots, $t_\ast(N)=\frac{\pi}{2\omega}+N\pi$ and
\begin{equation}
t_\#(N)=\pm\frac{1}{\omega}\left[\arccos\left(-\frac 12\right)+\pi N\right],
\end{equation}
for $N\geq 1$, where the shift is $\pm\Delta t$ (see Fig.\ref{shift}).
%They are clearly different from the roots of $\lambda(t)$, $t_\ast(N)=\frac{\pi}{2\omega}+N\pi$.
\end{Example}

\begin{figure}[htb!]
  \includegraphics[width=0.45\textwidth]{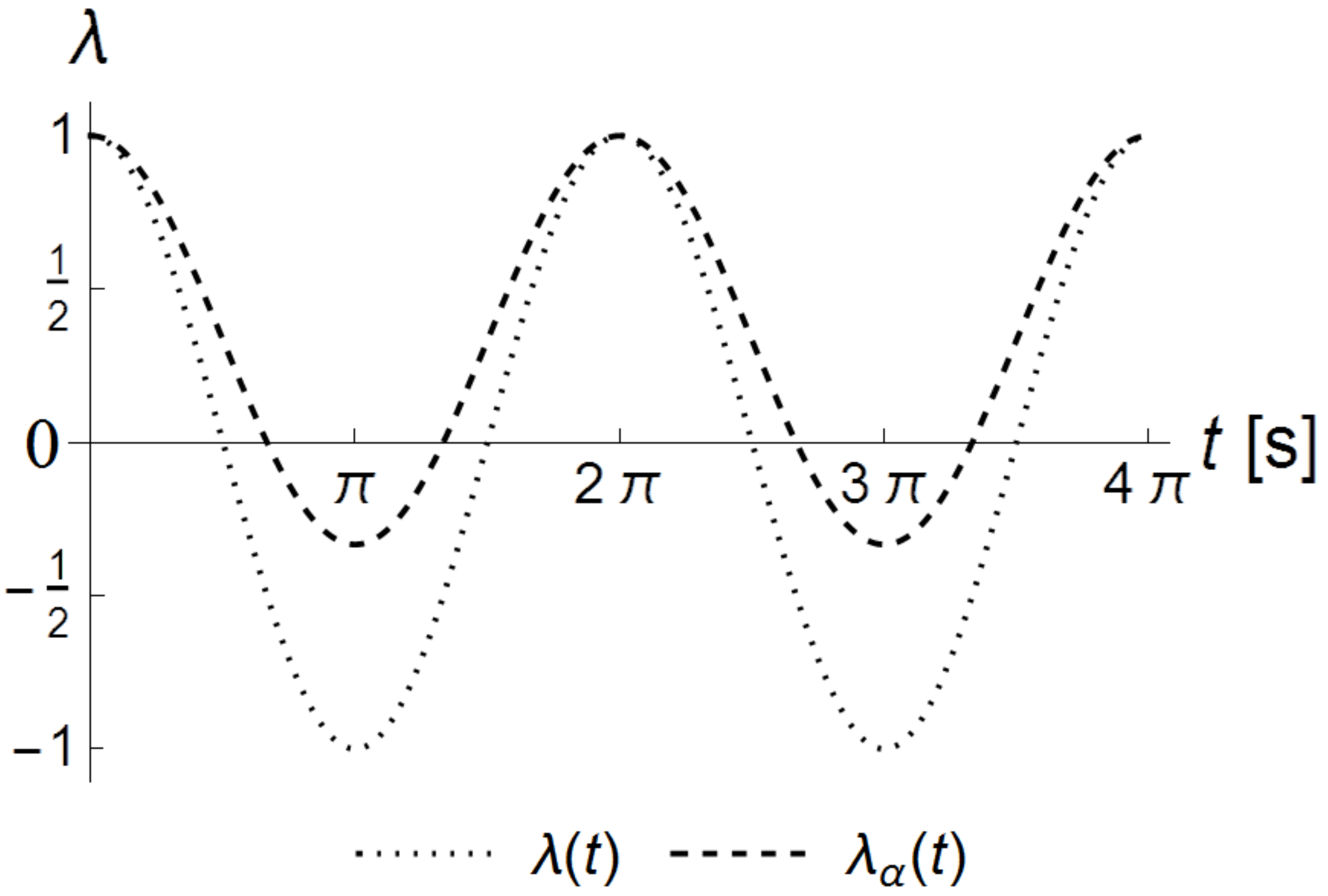}
\caption{
The eigenvalues of $\Lambda_\alpha(t)$ and their mixture $\Lambda(t)$ from Example \ref{Exo} for $\omega=1/\mathrm{s}$.
}
\label{shift}
\end{figure}

\section{Markovian semigroups and time-local generators}

Dynamical maps are usually constructed as the solutions to the master equations, which are the evolution equations for open quantum systems. The simplest generalized Pauli dynamical map is the Markovian semigroup $\Lambda(t)=e^{t\mathcal{L}}$ with the GKSL generator \cite{mub_final}
\begin{equation}
\mathcal{L}=\sum_{\alpha=1}^{d+1}\gamma_\alpha\mathcal{L}_\alpha,
\end{equation}
where
\begin{equation}\label{La}
\mathcal{L}_\alpha=\frac 1d \left[\mathbb{U}_\alpha-(d-1)\oper\right],
\end{equation}
and the decoherence rates $\gamma_\alpha\geq 0$. Recall that the dynamical semigroup describes the quantum systems that are weakly coupled to the environment. In the presence of strong coupling, it becomes necessary to include non-Markovian memory effects. One way to do this is to generalize the semigroup master equation to
\begin{equation}
\dot{\Lambda}(t)=\mathcal{L}(t)\Lambda(t)
\end{equation}
with the time-local generator 
\begin{equation}
\mathcal{L}(t)=\sum_{\alpha=1}^{d+1}\gamma_\alpha(t)\mathcal{L}_\alpha,
\end{equation}
whose decoherence rates no longer have to be positive. The relation between $\lambda_\alpha(t)$ and $\gamma_\alpha(t)$ is \cite{mub_final}
\begin{equation}
\lambda_\alpha(t)=\exp[\Gamma_\alpha(t)-\Gamma_0(t)],
\end{equation}
where $\Gamma_\alpha(t)=\int_0^t\gamma_\alpha(\tau)\der\tau$ and $\gamma_0(t)=\sum_{\alpha=1}^{d+1}\gamma_\alpha(t)$.

Now, observe that $\Lambda_\alpha(t)$ can be generated via a time-local generator $\gamma(t)\mathcal{L}_\alpha$ with
\begin{equation}
\gamma(t)=-\frac{\dot{\lambda}(t)}{\lambda(t)}.
\end{equation}
Hence, whenever $\lambda(t)$ is singular, the corresponding rate $\gamma(t)$ goes to infinity, and the generator is not regular. One finds the relation between $\gamma(t)$ and $\gamma_\alpha(t)$,
\begin{equation}\label{gammas}
\gamma_\alpha(t)=-\frac{\gamma(t)(1-x_\alpha)}{1+(e^{\Gamma(t)}-1)x_\alpha}+\gamma_0(t),
\end{equation}
where $\Gamma(t)=\int_0^t\gamma(\tau)\der\tau$ and
\begin{equation}
\gamma_0(t)=\frac 1d \sum_{\alpha=1}^{d+1}
\frac{\gamma(t)(1-x_\alpha)}{1+(e^{\Gamma(t)}-1)x_\alpha}.
\end{equation}
Therefore, $\gamma_\alpha(t)$ depends on $\gamma(t)$ and its integral in a non-linear way.
For $x_\alpha=1/k$, one has
\begin{equation}\label{g1}
\gamma_\alpha(t)=
\frac{\gamma(t)}{d}\frac{d-(k-1)(1-e^{-\Gamma(t)})}{1+(k-1)e^{-\Gamma(t)}}
\end{equation}
for $1\leq\alpha\leq k$ and
\begin{equation}
\gamma_\alpha(t)=
-\frac{\gamma(t)}{d}\frac{(k-1)(1-e^{-\Gamma(t)})}{1+(k-1)e^{-\Gamma(t)}}
\end{equation}
for $k+1\leq\alpha\leq d+1$.
From the complete positivity conditions, one has $\Gamma(t)\geq 0$. Hence, whenever $\gamma(t)\geq 0$, the rates $\gamma_\alpha(t)$ are positive for $1\leq\alpha\leq k$ and negative for $k+1\leq\alpha\leq d+1$. For $\gamma(t)\leq 0$, the converse is true. Observe that when $\gamma(t)\to\infty$, the generator $\mathcal{L}(t)$ of the mixture is not regular for $2\leq k\leq d$, as its decoherence rates $\gamma_\alpha(t)\to\pm\infty$.

\begin{Proposition}
For $k=d+1$, there exist mixtures $\Lambda(t)$ with generators $\mathcal{L}(t)$ that are regular for all $t\geq 0$.
\end{Proposition}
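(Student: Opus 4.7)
The plan is to exploit the substantial simplification of formula (\ref{g1}) when $k=d+1$ and then exhibit an explicit witness. First I would specialize (\ref{g1}): since all $d+1$ indices fall into the first case ($1\le\alpha\le k$), every rate reads
\begin{equation}
\gamma_\alpha(t)=\frac{\gamma(t)}{d}\cdot\frac{d-d(1-e^{-\Gamma(t)})}{1+d\,e^{-\Gamma(t)}}=\frac{\gamma(t)\,e^{-\Gamma(t)}}{1+d\,e^{-\Gamma(t)}}.
\end{equation}
Next I would use the relation $\gamma(t)=-\dot\lambda(t)/\lambda(t)$ and $\lambda(0)=1$ to obtain $e^{-\Gamma(t)}=\lambda(t)$, so the whole expression collapses to
\begin{equation}
\gamma_\alpha(t)=-\frac{\dot\lambda(t)}{1+d\lambda(t)}.
\end{equation}
The point of this rewriting is that the apparent singularity carried by $\gamma(t)=-\dot\lambda/\lambda$ at the zeros of $\lambda$ is no longer present: $\gamma_\alpha(t)$ is finite as long as $\dot\lambda$ is finite and $1+d\lambda(t)\neq 0$, i.e. $\lambda(t)\neq -1/d$.

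With this formula in hand, all that remains is to produce one $\lambda(t)$ with the following three properties: (i) $\lambda(t_\ast)=0$ for some $t_\ast>0$ (so that each $\Lambda_\alpha(t)$ is non-invertible, which is the setting of the proposition); (ii) $\lambda(t)\in[-\tfrac{1}{d-1},1]$ for all $t\ge 0$ with $1+d\lambda(t)$ bounded away from $0$; (iii) $\dot\lambda(t)$ bounded on $[0,\infty)$. The natural candidate is to run the argument backwards from the mixture: demand that $\Lambda(t)$ itself be a Markovian semigroup, $\lambda_\alpha(t)=e^{-ct}$ with $c>0$, which by (\ref{la}) forces
\begin{equation}
\lambda(t)=\frac{(d+1)e^{-ct}-1}{d}.
\end{equation}
This $\lambda(t)$ decreases monotonically from $1$ to $-1/d$, vanishes at $t_\ast=\ln(d+1)/c>0$ (confirming non-invertibility of each $\Lambda_\alpha$), and satisfies $1+d\lambda(t)=(d+1)e^{-ct}>0$ for every finite $t$. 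Substituting into the boxed formula yields $\gamma_\alpha(t)\equiv c/(d+1)$, a constant, hence a regular (in fact time-homogeneous, GKSL) generator.

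I would close by noting the structural reason behind the result: the singular part of $\gamma(t)$ is a simple pole at each zero of $\lambda(t)$, but for $k=d+1$ the combination $\gamma(t)\lambda(t)$ that appears in $\gamma_\alpha(t)$ kills precisely that pole, while for $k\le d$ only part of the rates enjoy this cancellation and the others diverge, as already noted in the paragraph preceding the proposition. The main (and really only) obstacle is the algebraic manipulation that turns the $e^{-\Gamma}$ appearing in (\ref{g1}) into $\lambda(t)$ and makes the pole cancellation manifest; once that is done, the witness is immediate.
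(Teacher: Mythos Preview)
Your approach is essentially the paper's: specialize (\ref{g1}) to $k=d+1$, observe that the rate collapses to a form governed by $1+d\lambda(t)$ rather than $\lambda(t)$ (the paper writes this as $\gamma_\alpha(t)=\gamma(t)/(d+e^{\Gamma(t)})$ and then parametrizes $\lambda$ via an auxiliary finite $r(t)$, arriving at $\gamma_\alpha(t)=r(t)/d$), and exhibit the same witness $\lambda(t)=[(d+1)e^{-ct}-1]/d$. Two small corrections: the constant rate you obtain is $c/d$, not $c/(d+1)$ (check: $-\dot\lambda=(c(d+1)/d)e^{-ct}$ while $1+d\lambda=(d+1)e^{-ct}$); and the identification $e^{-\Gamma(t)}=\lambda(t)$ is literally valid only while $\lambda(t)>0$, since $\Gamma(t)$ diverges at $t_\ast$ --- your final formula $\gamma_\alpha=-\dot\lambda/(1+d\lambda)$ is nonetheless correct for all $t$ and follows directly from $\lambda_\alpha=(1+d\lambda)/(d+1)$ together with $\dot\lambda_\alpha/\lambda_\alpha=-d\gamma_\alpha$, which is the cleaner route.
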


\begin{proof}
If $k=d+1$, then the rates in eq. (\ref{g1}) can be rewritten into
\begin{equation}\label{ga}
\gamma_\alpha(t)=\frac{\gamma(t)}{d+e^{\Gamma(t)}},
\end{equation}
so for $\gamma(t)\to\infty$ they produce an indeterminate form. However, they can be always regular, as is easily seen if one takes
\begin{equation}\label{lam2}
\lambda(t)=\frac{(d+1)e^{-R(t)}-1}{d},
\end{equation}
where $R(t)=\int_0^tr(\tau)\der\tau$ and $r(t)<\infty$ for finite times $t\geq 0$. Then, the associated maps $\Lambda_\alpha(t)$ are generated via $\gamma(t)\mathcal{L}_\alpha(t)$ with
\begin{equation}
\gamma(t)=-\frac{\dot{\lambda}(t)}{\lambda(t)}=\frac{r(t)(d+1)}{d+1-e^{R(t)}}.
\end{equation}
From eq. (\ref{ga}), the convex combination $\Lambda(t)$ has the corresponding
\begin{equation}
\gamma_\alpha(t)=\frac{r(t)(d+1)}{d+1-e^{R(t)}}\left[d+
\exp\left(\int_0^t\frac{(d+1)r(\tau)\der\tau}{d+1-e^{R(\tau)}}\right)\right]^{-1},
\end{equation}
which is exactly $\gamma_\alpha(t)=r(t)/d$ due to
\begin{equation}
\int_0^t\frac{(d+1)r(\tau)\der\tau}{d+1-e^{R(\tau)}}=\ln\frac{de^{R(t)}}{d+1-e^{R(t)}}.
\end{equation}
Hence, $\mathcal{L}(t)$ is a regular time-local generator.
\end{proof}

\begin{Remark}\label{MSG}
From the above proposition, it follows that a Markovian semigroup can result from mixing non-invertible dynamical maps. For $r(t)=r$, one has
\begin{equation}\label{lam}
\lambda(t)=\frac{(d+1)e^{-rt}-1}{d},
\end{equation}
which is evidently non-invertible, as $\lambda(t_\ast)=0$ for $t_\ast=\ln(d+1)/r$. The eigenvalues of the resulting map $\Lambda(t)$ are $\lambda_\alpha(t)=e^{-rt}$, which means that it is the Markovian semigroup generated via $\mathcal{L}$ with $\gamma_\alpha=r/d$.
\end{Remark}

\section{Memory kernels}

%MK description for CCNI, relations between kernels, categorizing CCNI MK, looking for regular kernels for irregular generators

Now, let us include the memory effects of quantum dynamics through the master equations with non-local memory kernels. The memory kernel master equations are integro-differential, hence the state of the system depends on its entire history (all previous states). Consider the evolution provided by the Nakajima-Zwanzig equation \cite{Nakajima,Zwanzig}
\begin{equation}
\dot{\Lambda}(t)=\int_0^tK(t-\tau)\Lambda(\tau)\der\tau,
\end{equation}
where $K(t)$ is a time-homogeneous memory kernel.
For the generalized Pauli channels, one takes
\begin{equation}\label{K}
K(t)=\sum_{\alpha=1}^{d+1}k_\alpha(t)\mathcal{L}_\alpha,
\end{equation}
where $\mathcal{L}_\alpha$ are given by eq. (\ref{La}). The Markovian semigroup follows for $K(t)=\delta(t)\mathcal{L}$.
Due to the eigenvectors $U_\alpha^k$ of $K(t)$ being time-independent, it is enough to solve the master equations for the eigenvalues
\begin{equation}
\dot{\lambda}_\alpha(t)=\int_0^t\kappa_\alpha(t-\tau)\lambda_\alpha(\tau)\der\tau,
\end{equation}
where $K(t)[U_\alpha^k]=\kappa_\alpha(t)U_\alpha^k$. The relation between $\kappa_\alpha(t)$ and $k_\alpha(t)$ reads
\begin{equation}
\kappa_\alpha(t)=k_\alpha(t)-k_0(t)
\end{equation}
with $k_0(t)=\sum_{\alpha=1}^{d+1}k_\alpha(t)$. Now, in the Laplace transform domain, the general solution of the Nakajima-Zwanzig equation has the form
\begin{equation}\label{lambda_s}
\widetilde{\lambda}_\alpha(s)=\frac{1}{s-\widetilde{\kappa}_\alpha(s)},
\end{equation}
where $\widetilde{f}(s)=\int_0^\infty f(t)e^{-st} \der t$ is the Laplace transform of $f(t)$.

\begin{Example}
Let us take the Pauli dynamical map $\Lambda_\alpha(t)$ with $\lambda(t)=\cos\omega t$. It is the solution of the master equations $\dot{\Lambda}_\alpha(t)=\gamma(t)\mathcal{L}_\alpha\Lambda_\alpha(t)$ and $\dot{\lambda}(t)=\int_0^t\kappa(t-\tau)\lambda(\tau)\der\tau$ with
\begin{equation}
\gamma(t)=\omega\tan\omega t,\qquad \kappa(t)=-\omega^2,
\end{equation}
respectively. Note that the memory kernel is regular while the time-local generator is singular, as $\gamma(t)\to\infty$ for $t=\frac{\pi}{\omega}(N+1/2)$, $N\in\mathbb{Z}$.
\end{Example}

The complementary behavior of generators and memory kernels was first observed in \cite{local-vs-non-local}. It has been shown that when one is simple and regular, the other one is complex or even singular.
However, it turns out that the dynamical maps $\Lambda_\alpha(t)$ from Remark \ref{MSG}, which can be mixed into the Markovian semigroup, are produced by the generators and memory kernels that are both singular.

\begin{Example}
The generalized Pauli dynamical map $\Lambda_\alpha(t)$ with
\begin{equation}
\lambda(t)=\frac{(d+1)e^{-rt}-1}{d},\qquad r>0,
\end{equation}
can be generated by the time-local generator $\gamma(t)\mathcal{L}_\alpha$ or the memory kernel $-\kappa(t)\mathcal{L}_\alpha$, where
\begin{equation}
\gamma(t)=\frac{r(d+1)}{d+1-e^{rt}}
\end{equation}
and
\begin{equation}
\kappa(t)=-\frac{r(d+1)}{d}\left[\delta(t)+\frac{r}{d}e^{rt/d}\right].
\end{equation}
Observe that $\gamma(t)\to\infty$ for $t=\ln(d+1)/r$ and $\kappa(t)\to-\infty$ for $t\to\infty$. Hence, both the time-local generator and the memory kernel are not regular, even though the solution $\Lambda_\alpha(t)$ is legitimate.
\end{Example}

%nonhomogeneus kernel
%From the semi-Markov evolution, we find that
%\begin{equation}
%\dot{\Lambda}(t)=\int_0^t\mathbb{K}(t-\tau)\Lambda(\tau)\der\tau,
%\end{equation}
%where $\mathbb{K}(t)=Q(t)-d\delta(t)Q(t)$, $Q(t)[U_\alpha^k]=q(t)U_\alpha^k$, and
%\begin{equation}
%q_\alpha(t)=... .
%\end{equation}

In \cite{memory}, the necessary and sufficient conditions for admissible memory kernels have been given. Namely, $K(t)$ defined in eq. (\ref{K}) is legitimate if and only if its eigenvalues in the Laplace transform domain are equal to
\begin{equation}\label{th_1}
\widetilde{\kappa}_\alpha(s)=-\frac{s\widetilde{\ell}_\alpha(s)}{1-\widetilde{\ell}_\alpha(s)},
\end{equation}
where $\ell_\alpha(t)$ satisfy
\begin{equation}\label{CON}
\begin{split}
\int_0^t \ell_\alpha(\tau)\der\tau&\geq 0,\\
\sum_{\beta=1}^{d+1}\int_0^t\ell_\beta(\tau)\der\tau&\leq\frac{d^2}{d-1},\\
\sum_{\beta=1}^{d+1}\int_0^t\ell_\beta(\tau)\der\tau&\geq d\,\int_0^t \ell_\alpha(\tau)\der\tau
\end{split}
\end{equation}
for all $\alpha=1,\ldots,d+1$. The corresponding solution is
\begin{equation}\label{}
\lambda_\alpha(t)=1-\int_0^t\ell_\alpha(\tau)\der\tau.
\end{equation}
In an analogical way, let us reparameterize $\lambda(t)$ as
\begin{equation}
\lambda(t)=1-\int_0^t\ell(\tau)\der\tau.
\end{equation}
The associated dynamical map is completely positive if and only if
\begin{equation}\label{ell}
0\leq\int_0^t\ell(\tau)\der\tau\leq\frac{d}{d-1},
\end{equation}
which is a direct consequence of the generalized Fujiwara-Algoet conditions $-\frac{1}{d-1}\leq\lambda(t)\leq 1$ for $\Lambda_\alpha(t)$. This map is generated via the memory kernel $-\kappa(t)\mathcal{L}_\alpha$, whose eigenvalues are $0$ (to the eigenvectors $U_\alpha^k$) and the $d(d-1)$-times degenerated
\begin{equation}\label{k}
\widetilde{\kappa}(s)=-\frac{s\widetilde{\ell}(t)}{1-\widetilde{\ell}(t)}.
\end{equation}
Now, $\kappa_\alpha(t)$ for the memory kernel generating the convex combination $\Lambda(t)$ are related to $\kappa(t)$ in the following way,
\begin{equation}\label{ka}
\widetilde{\kappa}_\alpha(s)=\frac{(1-x_\alpha)s\widetilde{\kappa}(s)}{s-x_\alpha\widetilde{\kappa}(s)}.
\end{equation}
Interestingly, the dependence between $\kappa_\alpha(t)$ and $\kappa(t)$ is less involved than between $\gamma_\alpha(t)$ and $\gamma(t)$ in eq. (\ref{gammas}).
Also, it can be shown that $\ell_\alpha(t)$ depend on $\ell(t)$ via
\begin{equation}\label{ela}
\ell_\alpha(t)=(1-x_\alpha)\ell(t),
\end{equation}
which is a special case of $\ell_\alpha(t)=\ell(t)/a_\alpha$ considered in \cite{chlopaki,memory}. Therefore, the mixture from eq. (\ref{Lt}) is generated via $K(t)$ with the eigenvalues
\begin{equation}\label{ka2}
\widetilde{\kappa}_\alpha(s)=-\frac{s(1-x_\alpha)\widetilde{\ell}(s)}
{1-(1-x_\alpha)\widetilde{\ell}(s)}.
\end{equation}
Note that condition (\ref{ell}) for $\ell(t)$ is also necessary and sufficient for the legitimacy of the above kernel.

\begin{Example}
If $d=2$ and $\ell(t)=\omega\sin\omega t$, the corresponding memory kernels are $\kappa(t)=-\omega^2$ and
\begin{equation}
\kappa_\alpha(t)=-\omega^2(1-x_\alpha)\cos\left[\sqrt{x_\alpha}\omega t\right].
\end{equation}
Hence, the mixture of dynamical maps generated by constant memory kernels solves the master equation with an oscillating $K(t)$. For $x_\alpha=1/3$, one recovers the evolution from Example \ref{Exo}.
\end{Example}

\begin{Example}
Let us take $\ell(t)=e^{-Zt}(Z\cos\omega t+\omega\sin\omega t)$ with $Z\geq \frac{\omega}{\pi}\ln(d-1)$.
This choice results in the memory kernels with the eigenvalues $\kappa(t)=-Z\delta(t)-\omega^2e^{-Zt}$ and
%\begin{equation}
%\begin{split}
%\kappa_\alpha(t)=&-(1-x_\alpha)\Bigg[Z\delta(t)+\frac{1}{P_\alpha}e^{-Z(1+x_\alpha)t/2}\\
%&\times
%\left(A_\alpha\sinh\frac{P_\alpha t}{2}+B_\alpha\cosh\frac{P_\alpha t}{2}\right)\Bigg],
%\end{split}
%\end{equation}
%where
%\begin{align}
%A_\alpha&=ZP_\alpha^2-(1-x_\alpha)Z(\omega^2+Z^2),\\
%B_\alpha&=P_\alpha(\omega^2-x_\alpha Z^2),\\
%P_\alpha&=\sqrt{Z^2(1-x_\alpha)^2-4\omega^2x_\alpha}.
%\end{align}
\begin{equation}
\begin{split}
\kappa_\alpha(t)=&-(1-x_\alpha)\Bigg[Z\delta(t)+\frac{1}{P_\alpha}e^{-Z(1+x_\alpha)t/2}\\
&\times
\left(-A_\alpha\sin\frac{P_\alpha t}{2}+B_\alpha\cos\frac{P_\alpha t}{2}\right)\Bigg],
\end{split}
\end{equation}
where
\begin{align*}
A_\alpha&=ZP_\alpha^2+(1-x_\alpha)Z(\omega^2+Z^2),\\
B_\alpha&=P_\alpha(\omega^2-x_\alpha Z^2),\\
P_\alpha&=\sqrt{4\omega^2x_\alpha-Z^2(1-x_\alpha)^2}.
\end{align*}
While $\kappa(t)$ is the sum of the semigroup part and the exponential decay, the second term in $\kappa_\alpha(t)$ can be either oscillating or decaying, depending on the value of $x_\alpha$. The oscillations occur if and only if
\begin{equation}
\frac{x_\alpha}{(1-x_\alpha)^2}>\left(\frac{Z}{2\omega}\right)^2.
\end{equation}
This inequality holds e.g. for the evolution from Example \ref{ExZo}, where $x_\alpha=1/d$ and $Z=\frac{\omega}{\pi}\ln(d-1)$, but only in dimensions $d\leq 10$. For $d\geq 11$, oscillating dynamical maps are produced by decaying kernels.
\end{Example}

\begin{Example}
For $\ell(t)=r(d+1)e^{-rt}/d$, one finds
\begin{equation}
\kappa_\alpha(t)=-\frac{r}{d}(d+1)(1-x_\alpha)\left[\delta(t)+S_\alpha e^{S_\alpha t}\right]
\end{equation}
with $S_\alpha=(r/d)[1-(d+1)x_\alpha]$. Note that $\kappa_\alpha(t\to\infty)\to-\infty$ whenever $x_\alpha<1/(d+1)$.
The choice $x_\alpha=1/(d+1)$ reproduces the memory kernel for the Markovian semigroup,
\begin{equation}
\kappa_\alpha(t)=-\frac{r}{d}\delta(t).
\end{equation}
\end{Example}

\section{Conclusions}

We analyzed the convex combinations of non-invertible generalized Pauli dynamical maps. We showed how to choose the mixing parameters to generate additional singularities, shift the existing ones, or remove them altogether. In particular, we demonstrated a method to construct the Markovian semigroup. Next, we analyzed the behavior of time-local generators and memory kernels upon mixing the corresponding dynamical maps. It turned out that every convex combination of the generalized Pauli dynamical maps is provided by a memory kernel defined with a probability distribution and a single function of time. Interestingly, the maps whose mixture is the Markovian semigroup are the solutions of the master equations with the time-local generators and memory kernels that are not regular.

Mixtures of non-invertible dynamical maps have not been thoroughly studied yet. Therefore, there are many open questions regarding this subject. It would be interesting to fully classify the types of quantum evolution that can be obtained by taking convex combinations of dynamical maps. One could ask what happens if maps with different singular points are being mixed. In particular, a mixture of an invertible and a non-invertible dynamical map could be considered. Also, there is an open problem considering Markovianity and non-Markovianity of non-invertible maps. It is unknown how the divisibility of dynamical maps transfers to the properties of their mixture.

\bibliography{C:/Users/cynda/OneDrive/Fizyka/bibliography}
\bibliographystyle{C:/Users/cynda/OneDrive/Fizyka/beztytulow2}

\end{document}